\documentclass[12pt]{article}
\usepackage{graphicx}
\usepackage{mathptmx,mathrsfs}
\usepackage{amsfonts,amsmath,amssymb,amsthm}
\usepackage{indentfirst}
\usepackage{cite}
\usepackage{hyperref}
\usepackage{enumitem}
\usepackage{bm,graphics,color}
\usepackage[english]{babel}

\numberwithin{equation}{section}
\setcounter{MaxMatrixCols}{10}

\newtheorem{theorem}{Theorem}[section]
\newtheorem{lemma}[theorem]{Lemma}
\theoremstyle{definition}

\setlength{\topmargin}{-2cm} \setlength{\oddsidemargin}{-0.5cm}
\setlength{\evensidemargin}{-0.5cm} \setlength{\textwidth}{17cm}
\setlength{\textheight}{24cm}

\begin{document}

\title{Generalized Stueckelberg-Higgs gauge theory}

\author{C. A. Bonin$^{1}$ \thanks{carlosbonin@gmail.com}, G. B. de Gracia$^{2}$\thanks{gb9950@gmail.com}, A. A. Nogueira$^{3}$\thanks{andsogueira@hotmail.com} and B. M. Pimentel$^{2}$\thanks{bruto.max@unesp.br}
\\
\textit{$^{1}$}\textit{\small{}Departamento de Matem\' atica e Estat\' istica, Universidade Estadual de Ponta Grossa (DEMAT/UEPG),}\\ \textit{\small{} Avenida Carlos Cavalcanti, CEP 4748, 84030-900, Ponta Grossa, PR, Brazil}\\
\textit{$^{2}$}\textit{\small{} Instituto de F\' isica Te\' orica, Universidade Estadual Paulista (IFT/Unesp)}\\
\textit{\small{} Rua Dr. Bento Teobaldo Ferraz 271, Bloco II Barra Funda, CEP
01140-070 S\~ao Paulo, SP, Brazil}\\
\textit{$^{3}$}\textit{\small{} Instituto de F\'isica, Universidade Federal de Goi\' as (IF/UFG),}\\
\textit{\small{}Av. Esperan\c{c}a, CEP 74690-900, Goi\^ania, Goi\' as, Brasil}\\}
\maketitle
\date{}

\begin{abstract}
The aim of this work is to discuss and explore some generalized aspects of  generation of photon mass respecting gauge symmetry. We introduce generalized Stueckelberg and Higgs gauge theories and present the classical and quantum frameworks related to them. We construct the quantum theory by writing the transition amplitude in the Fadeev-Senjanovic formalism and put it in a covariant form by the Fadeev-Popov method. We also analyze the independence of the transition amplitude by gauge parameters via BRS-T symmetry. Even in the generalized context, the Stueckelberg structure influences the quantization process of the Higgs theory in the 't Hooft's  fashion, in which an intimate relationship between the Stueckelberg compensating field and the Goldstone boson arises.

\textbf{Keywords}: Constrained Systems; Gauge Theories
\end{abstract}
\newpage{}

\section{Introductory Aspects}

The current understanding of how to formulate and interpret the laws of nature describing the fundamental interactions between matter and radiation is in terms of gauge theories \cite{Uti, Rai, Leite}. The Quantum Mechanical view of such interactions is through the assumption of the existence of mediating particles. The origin of this idea can be traced back to the (now famous) Yukawa theory \cite{Yukawa}. The core of the Yukawa theory is that the particular interaction conceived in it is short-ranged due to a non-vanishing mass for the corresponding mediating particle. That particle turned out to be the pion, which is associated to a scalar field. However, concerning gauge interactions, extensive studies of a hypothetical mass for vector fields were done by Proca \cite{Proca} and nowadays there are numerous experimental efforts to search for (nonzero) mass for photons \cite{Chinese}.

As the question of mass in the context of gauge symmetry attracted more attention, Stueckelberg explored it with its gauge-symmetry mass generation mechanisms \cite{Stuck, Stueck}. In the present day, Stueckelberg's particle find its application in Cosmology, as an Ultralight Dark Matter Candidate used to explain the anomalous rotation curves of galaxies \cite{Indians}.

It is quite natural to think that these ideas influenced the spontaneous symmetry-breaking mass generation mechanism (of the Higgs type) \cite{Higgs, Coleman, Peskin, Weinb} in which there is an interplay between the gauge fields and the Nambu-Goldstone bosons, whose general result is the gauge field acquiring a non-vanishing mass \cite{Rubakov}. The Higgs model led to the formulation of the Weinberg-Salam electroweak theory, in which massive vector bosons and photons mediate the interaction \cite{Weinberg}.

These concepts paved the way to the understanding of general and important properties related to the mechanisms of mass generation for gauge fields such as unitarity and renormalization \cite{HooftV}. The concept of mass generation also appears in other contexts, like in the Nambu-Jona-Lasinio model \cite{Nambu, Nakanishi} with the dynamical breaking of chiral symmetry \cite{Ebert} and in gauged Thirring model where the gauge symmetry does not prohibit the photon to acquiring mass dynamically, as shown by some of us \cite{Bonin}.

It is
also known that in plasma physics there is a shielding phenomenon where the interaction mediated by the gauge field
becomes short-ranged with the gauge field acquiring a mass - the so-called Debye mass \cite{LeB} and there is also the possibility of exploring the concept of restoration of symmetry (Jackiw-Dolan) above a critical temperature \cite{JD}.

On the other hand Bopp, Podolsky, and Schwed proposed a higher-order derivative Lagrangian with gauge symmetry in which, by dimensional reasons, an extra free parameter is introduced \cite{Bop}.  That free parameter is identified as the Podolsky’s mass, i.e., a mass associated with the Podolsky field.. This generalized electrodynamics gives the correct finite expressions for self-force of charged particles, as shown by Frenkel-Zayats \cite{Frank}, it gives rise to interesting effects produced by the presence of sources \cite{LW2}, and, as general rule, the theory offers better ultraviolet behavior (in a sense closely related to the Pauli-Villars-Rayski regularization scheme  regularization scheme \cite{PV, RThibes}). Although there is a characteristic length (the inverse of the Podolsky mass) associated with the size of the charged particles seen, for example, in the electron-positron scattering \cite{Daniel}, the possible source for the Podolsky’s mass, derived from the structure behind the self-interaction of the particles and their sizes, remains mysterious. But it is worth noticing that Podolsky electrodynamics breaks the dual symmetry \cite{Brandt} and lead us to speculate the existence of some mechanism that breaks the dual symmetry and generates mass.
From the point of view of degrees of freedom, generalized photons have five (Maxwell plus Proca) \cite{Anderson} and, in the free case, the generalized electrodynamics is equivalent to the Lee-Wick, wherein the gauge field is complex \cite{LeeW, FABAAN1}. These five degrees of freedom also have influence in the study of zero point energy \cite{Blaz, FabAnd}.

The generalized electrodynamics respect  some important general properties such as causality from the point of view of causal perturbation theory \cite{BufPimSot}, stability from the concept of Lagrangian anchor \cite{Russos}, and renormalizability as well \cite{Renorma}.

It is well known that the Stueckelberg field can be used to construct a gauge-invariant theory of first-order derivatives. One of the questions we intend to answer in this paper is whether a generalization of the Stueckelberg procedure can be devised for theories of second-order derivatives, in the Podolsky fashion. As we shall see, the answer is yes. Also, we shall show a connection between the generalized Stueckelberg method and a generalized Higgs mechanism. The importance of the analysis is a further study about quantization of gauge theories through constraint analysis and functional quantization \cite{Anderson,Galvao}. The motivation of the study is a better understanding of the properties of a Proca mass in generalized electrodynamics. As one knows, the way generalized photons gain mass has a phenomenological appeal not only in Higgs mechanism, but also in Debye shielding \cite{LW2,Deb}. A direct implication of the covariant quantization is the application of the fictitious parameters method in a new environment \cite{BRS-T,BRS-T2}, presenting an alternative of how some gauge parameters do not contribute to physics due to BRS-T symmetry.

In what follows, we will try to understand the mass generation mechanism in the generalized electrodynamics context, following the ideas in the previous works \cite{BRS-T, BRS-T2, Sto, Deb}. This work is organized as follows. In section \ref{classical} we define what we call the generalized Stueckelberg model \textcolor{blue}{,} discuss its classical aspects and construct the phase space for the model. In section \ref{quantum}, we construct the transition amplitude by the Fadeev-Senjanovic procedure and write it in the Fadeev-Popov covariant form. In Sec. \ref{BRS-T} we analyse the BRS-T symmetry and the invariance of the physics by gauge choices. In Sec. \ref{Higgs} we use the Stueckelberg structure to quantize the generalized Higgs gauge theory. Finally\textcolor{blue}{,} section \ref{conclusion} contains our conclusions and perspectives. The metric signature $(+,-,-,-)$ is used throughout.

\section{The Generalized Stueckelberg Gauge Theory}
\label{classical}

In the present section we will introduce the generalized Stueckelberg model and we will make a brief discussion of its canonical structure, including the constraint analysis, the canonical Hamiltonian, and its equations of motion.

We start with the following Lagrangian density

\begin{eqnarray}
\label{LS}
&&{\cal L}_S=-\frac{1}{4}F^{\mu\nu}F_{\mu\nu} +\frac{1}{2}a^{2}_{0}\partial_{\nu}F^{\nu\mu}\partial_{\rho}F^{\rho}_{\phantom{\rho}\mu} +\frac{1}{2}M^{2}S_{\mu}{S^{*}}^{\mu};\label{lagrangian}\\
&&S_{\mu}=A_{\mu}+\frac{1}{m_{s}}\partial_{\mu}B+\frac{i}{m_{p}^{2}}\partial_{\lambda}F^{\lambda}_{\phantom{\lambda}\mu},\label{S and A}
\end{eqnarray}
in which $F_{\mu\nu}=\partial_{\mu}A_{\nu}-\partial_{\nu}A_{\mu}$ are the components of the field-strength tensor,  $B$ is the Stueckelberg compensating field, and $a^{2}_{0}$, $M$, $m_s$, and $m_P$ are non-vanishing real parameters. The Lagrangian (\ref{lagrangian}) is invariant under the following $U\left(1\right)$ local gauge  transformation

\begin{eqnarray}
&&A_{\mu}\left(x\right)\rightarrow A_{\mu}\left(x\right)+\partial_{\mu}\alpha\left(x\right);\label{gsA}\\
&&B\left(x\right)\rightarrow B\left(x\right)-m_{s}\alpha\left(x\right),\label{gsB}
\end{eqnarray}
where $\alpha$ is a (smooth) Poincar{\'e} scalar field.

Our stepping stone Lagrangian density (\ref{lagrangian}) is to be interpreted as the usual Bopp-Podolsky Lagrangian density coupled with the Stueckelberg compensating field $B$ through (\ref{S and A}). Since our theory is originally one whose Lagrangian density depends on second-order derivatives of the gauge field $A$, we have added, for completeness, a term that contains second-order derivatives of $A$ to the usual interacting vector $S$  as well.

The previous Lagrangian (\ref{lagrangian}) can be  rewritten in a more convenient form as
\begin{equation}
\label{Lagra}
{\cal L}_S=-\frac{1}{4}F^{\mu\nu}F_{\mu\nu}+\frac{1}{2}a^{2}\partial_{\mu}F^{\mu\beta}\partial^{\nu}F_{\nu\beta} +\frac{1}{2}M^{2}\left(A_{\mu}+\frac{1}{m_{s}}\partial_{\mu}B\right)\left(A^{\mu}+\frac{1}{m_{s}}\partial^{\mu}B\right),
\end{equation}
where we have defined the (squared) changed Podolsky parameter $a^2={1}/{m^2}$ according to

\begin{equation}
 a^{2}\equiv a^{2}_{0}+\left(\frac{M}{m_{p}^{2}}\right)^{2}.\label{a0 to a}
\end{equation}
Although the translation $a_0^2\mapsto a^2$ acts as a simple relabeling, its purpose is making it clear that were we to turn off the interaction - that is, making $M=0$ in eq. (2.1), we would still get a second-order derivative theory for the free case. In this case the mechanisms studied generate a Proca mass and shift the Podolsky parameter, preserving the physical degrees of freedom. Otherwise, we would have a break in the conservation of physical degrees of freedom. The relevance of  defining this parameter is due to the fact that it controls the access to the $3$ phases of the system \cite{Deb}.

The Euler-Lagrange equations that follow from this Lagrangian density read

\begin{align}
(1+a^{2}\Box)(\eta_{\mu\nu}\Box-\partial_{\mu}\partial_{\nu})A^{\nu} +M^{2}\left(A_{\mu}+\frac{\partial_{\mu}B}{m_{s}}\right)=&\,0;\label{EL1}\\
\partial_{\mu}A^{\mu}+\frac{\Box B}{m_{s}}=&\,0,\label{EL2}
\end{align}
with explicit gauge symmetry, in view of the transformations (\ref{gsA}-\ref{gsB}). In order to solve the equations of motion, a gauge condition must be considered. Assuming this one
\begin{equation*} \Big(1+a^2\Box\Big)\partial^\mu A_\mu-\frac{M^2}{m_s}B=0 \end{equation*}
the equations of motion decouple and yield
 \begin{align}
\Big[\big(1+a^{2}\Box\big)\Box+M^2\Big]A_{\mu} =&\,0 \ ;\\
\Big[\big(1+a^{2}\Box\big)\Box+M^2\Big]B=&\,0.
\end{align}  
It is also possible to define the gauge invariant and transverse physical field
\begin{equation}
\label{trans}
\mathcal{U_\mu}=\Big(A_{\mu}+\frac{\partial_{\mu}B}{m_{s}}\Big)
\end{equation}
associated to the process of  ``eating the Goldstone boson"  in a generalized Higgs mechanism, as we will see posteriorly.

Now we will construct the phase space in the Hamilton formalism for the theory using the Dirac's methodology and the Ostrogradsky procedure, which is basically the Hamiltonian formalism for higher-order derivative theories. In this case, we have an extended phase space and, by applying the Dirac-Bergman algorithm, we will extract the constraints and degrees of freedom of the system. In order to write the Hamiltonian we will seek firstly the generator of the spacetime translations in the view of the Noether theorem, whose components are \cite{Noether, Barut}

\begin{equation}
\mathcal{T}^{\mu}_{\phantom{\mu}\lambda}=\frac{\partial{\cal L}}{\partial(\partial_{\mu}A_{\nu})}\partial_{\lambda}A_{\nu}-\partial_{\theta}\left[\frac{\partial{\cal L}}{\partial(\partial_{\mu}\partial_{\theta}A_{\nu})}\right]\partial_{\lambda}A_{\nu} +\frac{\partial{\cal L}}{\partial(\partial_{\mu}\partial_{\theta}A_{\nu})}\partial_{\theta}\partial_{\lambda}A_{\nu}+\frac{\partial{\cal L}}{\partial(\partial_{\mu}B)}\partial_{\lambda}B-\eta^{\mu}_{\lambda}{\cal L}.
\end{equation}

The canonical Hamiltonian $H_{c}=\int d^{3}x\, \mathcal{T}^{0}_{\phantom{0}0}\equiv \int d^{3}x\mathcal{H}_c$ is written explicitly as

\begin{equation}
\label{HS}
H_{c}=\int d^{3}x\left(\Pi^{\mu}\partial_{0}A_{\mu}+\Phi^{\mu}\partial_{0}\Gamma_{\mu}+\varPi\partial_{0}B-{\cal L}\right),
\end{equation}
where $\Gamma_{\mu}\equiv\partial_{0}A_{\mu}$ and with the following Ostrogradsky canonical momenta associated with the fields $A_\mu$, $\Gamma_\mu$, and $B$, respectively \cite{Ostrogradsky}. The expression of the Hamiltonian suggests a generalized structure for the phase space associated to the fact that the Lagrangian depends on the second derivatives of the fields. The extended phase space has the following non-vanishing fundamental Poisson Brackets
\begin{eqnarray}
\left\{A^{\mu}\left(\mathbf{x},t\right),\Pi_{\nu}\left(\mathbf{y},t\right)\right\}_{P}&=&\,\delta^{\mu}_{\nu}\delta^{(3)}(\mathbf{x}-\mathbf{y});\\
\left\{\Gamma^{\mu}\left(\mathbf{x},t\right),\Phi_{\nu}\left(\mathbf{y},t\right)\right\}_{P}&=&\,\delta^{\mu}_{\nu}\delta^{(3)}(\mathbf{x}-\mathbf{y});\\
\left\{B\left(\mathbf{x},t\right),\varPi\left(\mathbf{y},t\right)\right\}_{P}&=&\,\delta^{(3)}(\mathbf{x}-\mathbf{y}).
\end{eqnarray}
which characterizes the Ostrogadsky structure of the model.\\
\indent The canonical momenta are given below

\begin{eqnarray}
\Pi^{\mu}&=&\,F^{\mu 0}+\frac{1}{m_s^{2}}\left(\eta^{\mu j}\partial_{j}\partial_{l}F^{l0} -\partial_{0}\partial_{\nu}F^{\nu\mu}\right);\\
\Phi^{\mu}&=&\,\frac{1}{m_s^{2}}\left(\partial_{\nu}F^{\nu\mu}-\eta^{\mu 0}\partial_{j}F^{j0}\right);\\
\varPi&=&\,\frac{M^{2}}{m_{s}}\left(A_{0}+\frac{1}{m_{s}}\partial_{0}B\right).
\end{eqnarray}

Some of these momenta furnish the primary constraints of the theory\footnote{Throughout this work the symbol $\approx$ is used to denote \textit{weak equality} \cite{Dirac}.}

\begin{eqnarray}
\varphi_{1}\left(x\right)&=&\,\Phi_{0}\left(x\right)\approx 0;\label{constraint 1}\\
\varphi_{2}\left(x\right)&=&\,\Pi_{0}\left(x\right)-\overrightarrow{\partial}\cdot\overrightarrow{\Phi}\left(x\right)\approx 0,\label{constraint 2}
\end{eqnarray}

Now, we are able to write the canonical Hamiltonian density (\ref{HS}) in its explicit form as
\begin{align}
\mathcal{H}_C=&\,\Pi^0\Gamma^0+\vec \Phi.\vec{\partial}\Gamma^0-\overrightarrow{\Pi}\cdot\overrightarrow{\Gamma}-\frac{m^2}{2}\left|\overrightarrow{\Phi}\right|^2+\Phi^k\partial_jF^{jk} -\frac{1}{2}\left|\overrightarrow{\Gamma}+\overrightarrow{\partial}A^0\right|^2  -\frac{1}{2m^2}\left(\overrightarrow{\partial}^2A^0+\overrightarrow{\partial}\cdot\overrightarrow{\Gamma}\right)^2 +\nonumber\\
&\,+\frac{1}{4}F^{jk}F_{jk}+\frac{m^2_s}{2M^2}\varPi^2-m_sA^0\varPi+\frac{M^2}{2}\left|\overrightarrow{A}-\frac{1}{m_s}\overrightarrow{\partial}B\right|^2.\label{H C}
\end{align}

The Ostrogradsky analysis leads us to the following non-vanishing fundamental Poisson brackets:

\begin{eqnarray}
\left\{A^{\mu}\left(\mathbf{x},t\right),\Pi_{\nu}\left(\mathbf{y},t\right)\right\}_{P}&=&\,\delta^{\mu}_{\nu}\delta^{(3)}(\mathbf{x}-\mathbf{y});\\
\left\{\Gamma^{\mu}\left(\mathbf{x},t\right),\Phi_{\nu}\left(\mathbf{y},t\right)\right\}_{P}&=&\,\delta^{\mu}_{\nu}\delta^{(3)}(\mathbf{x}-\mathbf{y});\\
\left\{B\left(\mathbf{x},t\right),\varPi\left(\mathbf{y},t\right)\right\}_{P}&=&\,\delta^{(3)}(\mathbf{x}-\mathbf{y}).
\end{eqnarray}

Now, following the Dirac methodology, we define the primary Hamiltonian density ${\cal H}_{p}={\cal H}_{c}+C_{1}\varphi^{1}+C_{2}\varphi^{2}$, where $\varphi^1$ and $\varphi^2$ are the constraints (\ref{constraint 1}) and (\ref{constraint 2}) and $C_1$ and $C_2$ are the respective Lagrange multipliers. The consistence conditions for each constraint, namely,

\begin{equation}
 \partial_{0}\varphi_{j}\left(\mathbf{x},t\right)=\int d^{3}{y}\{\varphi_{j}\left(\mathbf{x},t\right),{\cal H}_{p}\left(\mathbf{y},t\right)\}_{P}\approx 0,
\end{equation}
gives us only one additional constraint: the Gauss' law
\begin{equation}
 \varphi_{3}\left(x\right)\equiv m_{s}\varPi\left(x\right)+\overrightarrow{\partial}\cdot\overrightarrow{\Pi}\left(x\right)\approx 0,\label{constraint 3}
\end{equation}
whose consistency condition is tautological. As such, we have the full set of constraints $\left\{\varphi_{1},\varphi_{2},\varphi_{3}\right\}$ and the extended Hamiltonian density ${\cal H}_{E}={\cal H}_{p}+C_{3}\varphi^{3}$ with a new Lagrange multiplier $C_3$. Moreover, this set of constraints is first class, which means $\left\{\varphi_{j}\left(\mathbf{x},t\right),\varphi_{k}\left(\mathbf{y},t\right)\right\}_P = 0$ for all $j,k\in\left\{1,2,3\right\}$.

Using the extended Hamiltonian as the generator of the time evolution, we find the Hamilton equations\footnote{For any field $\phi\left(\mathbf{x},t\right)$ we have $\dot{\phi}\left(\mathbf{x},t\right)=\left\{\phi\left(\mathbf{x},t\right),\int d^3y \mathcal{H}_e\left(\mathbf{y},t\right)\right\}_P$.}

\begin{align}
\dot{A}^\mu\left(x\right)=&\,\delta^\mu_k\left[\Gamma^k\left(x\right)-\partial_kC_3\left(x\right)\right]+\delta_\mu^0\Gamma^0+\delta^\mu_0C_2\left(x\right);\label{dot A mu}\\
\dot{\Gamma}^\mu\left(x\right)=&\,\delta_\mu^i\partial_i\Gamma^0-\delta^\mu_k\left[m^2\Phi_k\left(x\right)+\partial_jF^{jk}\left(x\right)-\partial_kC_2\left(x\right)\right] +\delta^\mu_0C_1\left(x\right);\label{dot Gamma mu}\\
\dot{B}\left(x\right)=&\,\frac{m_s^2}{M^2}\varPi\left(x\right)-m_s A^0\left(x\right)+m_sC_3\left(x\right);\label{dot B}\\
\dot{\Pi}_\mu\left(x\right)=&\,-\delta^k_\mu\overrightarrow{\partial}^2\Phi_k\left(x\right) -\partial^j_\mu\overrightarrow{\partial}\cdot\overrightarrow{\Phi}\left(x\right)
-\delta^0_\mu\left\{\overrightarrow{\partial}\cdot\overrightarrow{\Gamma}\left(x\right)+\overrightarrow{\partial}^2A_0\left(x\right)\right.\nonumber\\ &\,\left.-\frac{1}{m^2}\overrightarrow{\partial}^2\left[\overrightarrow{\partial}^2A^0\left(x\right) +\overrightarrow{\partial}\cdot\overrightarrow{\Gamma}\left(x\right)\right]\right\} -\frac{1}{2}\left(\delta^j_\mu\partial_k-\delta^k_\mu\partial_j\right)F^{kj}\left(x\right)\nonumber\\
&\,+\delta^0_\mu m_s\varPi\left(x\right)-M^2\delta^k_\mu\left[A^k-\frac{1}{m_s}\partial_k B\left(x\right)\right];\\
\dot{\Phi}_\mu\left(x\right)=&\,\delta^k_\mu\left\{\Gamma^k\left(x\right)+\partial_k A^0\left(x\right)-\Pi_k\left(x\right) -\frac{1}{m^2}\partial_k\left[\overrightarrow{\partial}^2A^0\left(x\right)+\overrightarrow{\partial}\cdot\overrightarrow{\Gamma}\left(x\right)\right]\right\}+\delta_\mu^0(\Pi_0-\vec{\partial}.\vec \Phi);\\
\dot{\varPi}\left(x\right)=&\,-\frac{M^2}{m_s}\left[\overrightarrow{\partial}\cdot\overrightarrow{A}\left(x\right) -\frac{1}{m_s}\overrightarrow{\partial}^2 B\left(x\right)\right].\label{dot varPi}
\end{align}

As we can see, there are some arbitrariness in the Hamilton equations due to the Lagrange multiplier dependence. Dirac conjectured that the first-class constraints are the generators of the gauge transformations \cite{Henneaux Teitelboim}. In order to set this arbitrariness aside, we must impose (three) extra constraints on the system in such a way that the new set of constraints $\Omega=\left\{\omega_1,\omega_2,\omega_3,\omega_4,\omega_5,\omega_6\right\}$ is second class. This means that for any $j\in\left\{1,2,3,4,5,6\right\}$ there is at least one $k\in\left\{1,2,3,4,5,6\right\}$ for which $\left\{\omega_{j}\left(\mathbf{x},t\right),\omega_{k}\left(\mathbf{y},t\right)\right\}_P \neq 0$. As a result,  the Hamilton equations are dynamically consistent with the Euler-Lagrange equations. In order to do that, we firstly notice that, thanks to (\ref{dot varPi}), the time derivative of equation (\ref{dot B}) leads to

\begin{equation}
\partial_\mu A^\mu\left(x\right)+\frac{\Box}{m_s}B\left(x\right)=\frac{\partial C_3\left(x\right)}{\partial t}.\label{finding C3}
\end{equation}

Comparing this equation with (\ref{EL2}) tells us that the Lagrange multiplier $C_3$ must be time independent. However, the left-hand-side of (\ref{finding C3}) is a Lorentz scalar, which means its right-hand-side must be invariant under Lorentz as well. The only way a Lorentz scalar is time-independent is if it is event-independent. Therefore, the Lagrange multiplier $C_3$ is a constant which, without loss of generality, may be taken to be zero.

Before we go into details as how to find the other unknown functions, whose computations are more involved, let us notice that the zero-th components of equations (\ref{dot A mu}) and (\ref{dot Gamma mu}) read

\begin{align}
C_1\left(x\right)=&\,\dot{\Gamma}^0\left(x\right);\label{C1}\\
C_2\left(x\right)=&\,\dot{A}^0\left(x\right)-\Gamma^0.\label{C2}
\end{align}

The main goal of the present procedure is obtaining Hamilton equations which are equivalent to the Euler-Lagrange equations. We already used one of them to discover the Lagrange multiplier $C_3$. In order to find the other functions, we shall use the other Euler-Lagrange equations. To be more precise, let us look into the zero-th component of (\ref{EL1}):

\begin{align}
\left[M^2-\left(1+\frac{\Box}{m^2}\right)\overrightarrow{\partial}^2\right]A^0\left(x\right) =\left(1+\frac{\Box}{m^2}\right)\frac{\partial\left(\overrightarrow{\partial}\cdot\mathbf{A}\left(x\right)\right)}{\partial t}-\frac{M^2}{m_s}\frac{\partial B\left(x\right)}{\partial t}.
\end{align}

This equation can be solved \textit{formally} as

\begin{align}
A^0\left(x\right) =\left[M^2-\left(1+\frac{\Box}{m^2}\right)\overrightarrow{\partial}^2\right]^{-1}\frac{\partial}{\partial t} \left[\left(1+\frac{\Box}{m^2}\right)\overrightarrow{\partial}\cdot\mathbf{A}\left(x\right)-\frac{M^2}{m_s} B\left(x\right)\right].
\end{align}

In complete analogy with \cite{Galvao}, we choose the \textit{modified generalized Coulomb condition}

\begin{equation}
\left(1+\frac{\Box}{m^2}\right)\overrightarrow{\partial}\cdot\mathbf{A}\left(x\right)\approx\frac{M^2}{m_s} B\left(x\right).\label{modified coulomb}
\end{equation}

This condition and its consistency  - that is, the fact that it must hold for all time - imply $A^0\left(x\right)\approx0$ and $\dot{A}^0\left(x\right)=\Gamma^0\left(x\right)\approx0$. These equalitlies, through relations (\ref{C1}) and (\ref{C2}), define the last two unknown Lagrange multiplier to be both identically equal to zero as well.

So, we are lead to choose the following gauge conditions:

\begin{align}
\Sigma_1\left(x\right)\equiv&\,\Gamma^0\left(x\right)\approx 0;\label{gauge 1}\\
\Sigma_2\left(x\right)\equiv&\, A^0\left(x\right)\approx 0;\label{gauge 2}\\
\Sigma_3\left(x\right)\equiv&\,\left(1+\frac{\Box}{m^2}\right)\overrightarrow{\partial}\cdot\mathbf{A}\left(x\right)-\frac{M^2}{m_s} B\left(x\right)\approx 0.\label{gauge 3}
\end{align}

Taking into account the first-class constraints (\ref{constraint 1}), (\ref{constraint 2}), and \ref{constraint 3}), as well as the gauge conditions (\ref{gauge 1}), (\ref{gauge 2}), and (\ref{gauge 3}), we define:

\begin{eqnarray}
  \omega_j\left(x\right) &=& \left\{\begin{array}{ll}
                                      \varphi_j\left(x\right), & \mbox{for }j\in\left\{1,2,3\right\}; \\
                                      \Sigma_{j-3}\left(x\right) & \mbox{for }j\in\left\{4,5,6\right\}.
                                    \end{array}\right.
\end{eqnarray}

Therefore, we achieved our goal that was to find a set of constraints that is of second class: $\Omega=\left\{\omega_1,\omega_2,\omega_3,\omega_4,\omega_5,\omega_6\right\}$, as evidenced by the non-vanishing determinant of the matrix $\left[\Omega_{jk}\right]$, whose $jk$-th component is $\Omega_{jk}\equiv\left\{\omega_j\left(\mathbf{x},t\right),\omega_k\left(\mathbf{y},t\right)\right\}_P$:

\begin{eqnarray}
  \left[\Omega_{jk}\right] &=& \left[\begin{array}{cccccc}
                                 0 & 0 & 0 & -1 & 0 & 0 \\
                                 0 & 0 & 0 & 0 & -1 & 0 \\
                                 0 & 0 & 0 & 0 & 0 & \left(1+\frac{\Box}{m^2}\right)\overrightarrow{\partial}^2 -M^2 \\
                                 1 & 0 & 0 & 0 & 0 & 0 \\
                                 0 & 1 & 0 & 0 & 0 & 0 \\
                                 0 & 0 & -\left(1+\frac{\Box}{m^2}\right)\overrightarrow{\partial}^2+M^2 & 0 & 0 & 0
                               \end{array}\right]\delta^{\left(3\right)}\left(\mathbf{x}-\mathbf{y}\right).\nonumber\\
\end{eqnarray}

Before proceeding to the quantization of the theory, let us make a quick counting of the degrees of freedom. We have a set of $18$ variables $\left\{A^{\mu}\left(x\right),\Gamma^{\mu}\left(x\right),B,\Pi_{\nu}\left(x\right),\Phi_{\mu}\left(x\right),\varPi\left(x\right)\right\}$ and $6$ constraints - the $\Omega$ set - to describe the phase space. So, we ended up with $12$ independent fields in the phase space or, alternatively, $6$ degrees of freedom in the configuration space. This is the same number of degrees of freedom a theory comprising the Podolsky field with a real scalar field has $\left(5+1\right)$ which, in its turn, is the same one as Maxwell, Proca, and Scalar $(2+3+1)$ \cite{Bonin}. This makes it clear that the ultimate effect of the Proca mass term in (\ref{lagrangian}) is to add one degree of freedom to the already existing $5$ of the usual free Podolsky theory. This goes against the naive notion that the Lagrangian density (\ref{Lagra}) might have given that the final theory would have $7$ degrees of freedom - that is, two massive photons and the scalar Stueckelberg field. Our results show that this does not happen  because the gauge symmetry shields this degree of freedom.

It is interesting to mention that the Podolsky-Stueckelberg model has several interesting properties that motivates its study. For the case of the Podolsky's $\textit{QED}$ plasma (massless and massive photons interacting with fermions in finite temperature), a temperature dependent extra term arises in the propagator. Then, it acquires two massive poles. The low energy structure of this new contribution can be modelled by adding a Proca mass term to the Podolsky model. We verified that the structure of the resulting model implies in the possibility of three different phases associated to the behaviour of the non-relativistic potential. They are accessed through specific relations between the values of the Proca and Podolsky masses \cite{Deb}. This interaction can be expressed as the superposition of two Yukawa potentials, an exponentially decaying function, or even present an oscillatory structure. This last property occurs when the poles become complex  \footnote{It can be achieved by a specific relation between the Proca and Podolsky masses.} \cite{Gribov}. The fact that the structure of the physical part of the two-point function is proportional to the difference of two free massive propagators leads to an ultraviolet improvement associated to a well behaved potential at the origin. There is also another notable property related to the one-loop correction to the vertex function for a model composed by this Podolsky-Proca gauge field in interaction with fermions. Comparing to the case of $\textit{QED}$, the infrared and ultraviolet divergent terms that arise, in its low energy limit, cancel out with the remaining part being proportional to the finite term $\ln{m_+/m_-}$ in which $m_\pm$ denotes the pole masses \cite{Schwartz} of the gauge field propagator.

\section{The (quantum) transition amplitude}\label{quantum}

Since we have constructed the phase space of the model following Dirac methodology, in this section we will write down the transition amplitude of the quantum theory in a covariant fashion \cite{Constraint,Zambrano}.

With the full set of constraints, the next step is to obtain the transition amplitude for the quantum theory. It is originally written as

\begin{align}
Z=\int \left[\prod_{\mu=0}^3\mathcal{D}A^\mu\mathcal{D}\Pi_\mu\mathcal{D}\Gamma^\mu\mathcal{D}\Phi_\mu\right]\mathcal{D}B\mathcal{D}\varPi \sqrt{\det{\left[\Omega_{jk}\right]}}\left[\prod_{l=1}^6\delta\left(\omega_l\right)\right]e^{i\int d^4x\mathcal{L}_C}
\end{align}
where $\mathcal{L}_C$ is the Lagrangian density in the canonical form, defined in terms of (\ref{H C}) as

\begin{align}
\mathcal{L}_C=\Pi_\mu \dot{A}^\mu +\Phi_\mu\dot{\Gamma}^\mu+\varPi\dot{B}-\mathcal{H}_C.
\end{align}

After integrating over some variables\textcolor{blue}{,} we are left with the following transition amplitude:

\begin{align}
Z=\int \left[\prod_{\mu=0}^3\mathcal{D}A^\mu\right]\mathcal{D}B\det\left(\left(1+\frac{\Box}{m^2}\right)\overrightarrow{\partial}^2-M^2\right) \delta\left(\left(1+\frac{\Box}{m^2}\right)\overrightarrow{\partial}\cdot \mathbf{A}-\frac{M^2}{m_s}B\right)e^{i\int d^4x\mathcal{L}_S},\label{Z in a frame}
\end{align}
where $\mathcal{L}_S$ is the Lagrangian density (\ref{Lagra}).

Although the transition amplitude above  is correct, it is not Lorentz covariant. This is due to the fact that it is written in the modified generalized Coulomb gauge (\ref{modified coulomb}). Such a gauge is valid in a particular inertial frame. In order to write $Z$ is a covariant form, that is, a form that remains the same in every inertial frame, we define the \textit{modified generalized Lorenz gauge} as

\begin{align}
\left(1+\frac{\Box}{m^2}\right)\partial_\mu A^\mu\left(x\right) = g\frac{M^2}{m_s}B\left(x\right)+f\left(x\right)\label{modified lorenz}
\end{align}
where $g$ is a dimensionless real number and $f$ an arbitrary Lorentz scalar function.

Using the so-called Fadeev-Popov trick, we can rewrite (\ref{Z in a frame}) as

\begin{align}
Z=\int \left[\prod_{\mu=0}^3\mathcal{D}A^\mu\right]\mathcal{D}B\det\left(\left(1+\frac{\Box}{m^2}\right)\Box+gM^2\right) \delta\left(\left(1+\frac{\Box}{m^2}\right)\partial_\nu A^\nu-g\frac{M^2}{m_s}B-f\right)e^{i\int d^4x\mathcal{L}_S}.\label{Z in any frame}
\end{align}

Now, we define a measure $\mathcal{M}$ which is a functional of the scalar function $f$:

\begin{align}
\int d\mathcal{M}\left[f\right]=1.
\end{align}

As shown in \cite{thooft}, an adequate choice for such a measure is $d\mathcal{M}\left[f\right]=\mathcal{D}f$ \\ $\exp\left\{-i\int d^4x\left[f\left(x\right)\right]^2/\left[2\left(\xi+0^+\right)\right]+\ln\left(-2i\pi\xi\right)/2\right\}$. With this, up to a (possibly ill-defined) normmalization constant, the transition amplitude assumes the simple, covariant form:

\begin{align}
Z=\int \left[\prod_{\mu=0}^3\mathcal{D}A^\mu\right]\mathcal{D}B\mathcal{D}\bar{c}\mathcal{D}c\, e^{i\int d^4x\mathcal{L}_{{eff}}},\label{transition}
\end{align}
where the effective Lagrangian density $\mathcal{L}_{{eff}}$ is defined as

\begin{align}
\label{tampli}
\mathcal{L}_{{eff}}\equiv \mathcal{L}-\frac{1}{2\xi}\left[\left(1+\frac{\Box}{m^2}\right)\partial_\mu A^\mu-g\frac{M^2}{m_s}B\right]^2 +\bar{c}\left[\left(1+\frac{\Box}{m^2}\right)\Box+gM ^2\right]c,
\end{align}
where $\bar{c}$ and $c$ are Grassmannian scalar fields.

Next, we shall study the BRS-T symmetry of this theory.

\section{BRS-T symmetry and fictitious parameters}
\label{BRS-T}

As we are aware, in the covariant transition amplitude the Lorenz gauge condition is generalized to the $R_{\xi}$ gauges [$\frac{1}{2\xi}(\partial_{\mu}A^{\mu})^{2}$] and more than that, we have the 't Hooft-Veltman gauges [$\frac{1}{2\xi}(\partial_{\mu}A^{\mu}+gA_{\mu}A^{\mu})^{2}$] \cite{Nash}, in which  we have a fictitious interaction between the ghosts and the gauge field, characterized by the $g$ parameter. Surprisingly, the transition amplitude does not depend on this $g$ parameter due to the BRS-T symmetry \cite{BRS-T, BRS-T2}, so this is a dummy parameter and we can take $g=0$. The previous statement is related to the fact that physics (S-matrix) does not depend on $g$ \cite{Smatrix} when we include the fermions and study the effective action. The same line of reasoning can be applied in generalized electrodynamics (Podolsky), wherein we have the generalized covariant gauge choices and a dummy parameter. The independence of the amplitudes from the parameter $g$ is to be expected since the physical field $\mathcal{U}_\mu$ discussed previously in eq. (\ref{trans}) has equations of motion that do not depend on any gauge fixation.

Now the question that arises is whether the same methodology of the fictitious parameters applies to the 't Hooft gauge [$\frac{1}{2\xi}(\partial_{\mu}A^{\mu}+gqv\zeta)^{2}$] \cite{Framp} when we study the Higgs model and the spontaneous symmetry breaking. Here $gqv$ represents a fictitious mass to the Goldstone boson $\zeta$. As the transition amplitude for the generalized Stueckelberg theory in eq. (\ref{tampli}) has the same 't Hooft gauge structure in the limit of $m\rightarrow\infty$, with the Stueckelberg field $B$ playing the role of the Goldstone boson, let us take a look on the parameter $g$ and see whether it is fictitious.

In other words, the purpose of this section is to learn whether or not the parameter $g$ appearing for the first time in the modified generalized Lorenz condition (\ref{modified lorenz}) and used explicitly in the transition amplitude (\ref{transition}) \textit{via} (\ref{tampli}) has a physical meaning.

Before we proceed, let us define the generating functional\footnote{As usual, $Z_{GF}\left[0,0,0,0\right]=Z$, the transition amplitude.}

\begin{align}
Z_{GF}\left[J,j,\bar{\zeta},\zeta\right]=N\int \left[\prod_{\rho=0}^3\mathcal{D}A^\rho\right]\mathcal{D}B\mathcal{D}\bar{c}\mathcal{D}c e^{iS\left[A,B,c,\bar{c};J,j,\bar{\zeta},\zeta\right]}\label{generating functional}
\end{align}
where $J$, $j$, $\bar{\zeta}$, and $\zeta$ are classical sources (the last two being also Grassmannian) and

\begin{align}
S\left[A,B,c,\bar{c};J,j,\bar{\zeta},\zeta\right]\equiv\int d^4x\mathcal{L}_{eff}+\int d^4x\left[J_\mu\left(x\right)A^\mu\left(x\right) +j\left(x\right)B\left(x\right)+\bar{\zeta}\left(x\right)c\left(x\right)+\zeta\left(x\right)\bar{c}\left(x\right)\right].
\end{align}

We also define the \textit{BRS-T transformation} as

\begin{align}
A^\mu\left(x\right)\rightarrow&\,\tilde{A}^\mu\left(x\right)\equiv A^\mu\left(x\right)+\delta A^\mu\left(x\right);\label{transformation A}\\
B\left(x\right)\rightarrow&\,\tilde{B}\left(x\right)\equiv B\left(x\right)+\delta B\left(x\right);\\
c\left(x\right)\rightarrow&\,\tilde{c}\left(x\right)\equiv c\left(x\right)+\delta c\left(x\right);\\
\bar{c}\left(x\right)\rightarrow&\,\tilde{\bar{c}}\left(x\right)\equiv\bar{c}\left(x\right)+\delta \bar{c}\left(x\right),\label{transformation c bar}
\end{align}
with
\begin{align}
\delta A^\mu\left(x\right)\equiv&\,\lambda\partial^\mu c\left(x\right);\\
\delta B\left(x\right)\equiv&\,\lambda m_s c\left(x\right);\\
\delta c\left(x\right)\equiv&\,0;\\
\delta \bar{c}\left(x\right)\equiv&\,\frac{\lambda}{\xi}\left[\left(1+\frac{\Box}{m^2}\right)\partial_\mu A^\mu\left(x\right) -g\frac{M^2}{m_s}B\left(x\right)\right].\label{BRS-T c bar}
\end{align}

Here, $\lambda$ is a non-vanishing Grassmannian constant.

From now on, we shall write

\begin{align}
\delta_{BRS-T}Q\equiv \tilde{Q}-Q
\end{align}
for any quantity $Q$. The symbol $\tilde{Q}$ signifies the quantity $Q$ transformed under (\ref{transformation A}-\ref{BRS-T c bar}).

\begin{lemma}
  The generating functional (\ref{generating functional}) is invariant under the transformation (\ref{transformation A}-\ref{transformation c bar}).\label{lemma GF}
\end{lemma}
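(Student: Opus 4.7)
The plan is to treat the BRST transformation (\ref{transformation A}--\ref{transformation c bar}) as a change of integration variables in (\ref{generating functional}), and to verify that both the (super)measure and the dynamical action $\int d^4x\,\mathcal{L}_{eff}$ return to their original form after the relabeling. Since the sources are not BRST-transformed, their presence is passive in this argument; the formal variations they induce will be revisited in the next subsection to produce Ward identities.

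First I would compute the super-Jacobian of the transformation. Because $\delta A^\mu$ and $\delta B$ depend only on the ghost $c$, because $\delta c=0$, and because $\delta \bar c$ depends only on $A^\mu$ and $B$, the functional Jacobian matrix has a block-triangular structure with the identity along the diagonal. Its Berezinian is therefore $1$, so the path-integral measure $\bigl[\prod_{\rho}\mathcal{D}A^\rho\bigr]\mathcal{D}B\,\mathcal{D}\bar c\,\mathcal{D}c$ is invariant. Equivalently, the BRST transformation is nilpotent and of the form $\phi\to\phi+\lambda\,s\phi$ with $s$ an odd derivation, whose super-Jacobian reduces to unity.

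Next I would show $\delta_{BRST}\mathcal{L}_{eff}=0$ up to total derivatives. Splitting $\mathcal{L}_{eff}=\mathcal{L}_S+\mathcal{L}_{GF}+\mathcal{L}_{gh}$, with $\mathcal{L}_{GF}=-\frac{1}{2\xi}\bigl[(1+\Box/m^2)\partial_\mu A^\mu-gM^2B/m_s\bigr]^2$ and $\mathcal{L}_{gh}=\bar c\bigl[(1+\Box/m^2)\Box+gM^2\bigr]c$: the crucial observation is that on $(A,B)$ the BRST shift coincides with the $U(1)$ gauge transformation (\ref{gsA}--\ref{gsB}) carrying parameter $\alpha\propto\lambda c$, so $\delta_{BRST}\mathcal{L}_S=0$ follows from the manifest gauge invariance of (\ref{Lagra}). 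Applying the same shift to the gauge-fixing condition produces exactly the operator appearing in $\mathcal{L}_{gh}$; hence $\delta\mathcal{L}_{GF}$ contains the factor $[\text{g.f.\ condition}]\times[(1+\Box/m^2)\Box+gM^2]c$. Since $\delta c=0$, only $\delta\bar c$ contributes to $\delta\mathcal{L}_{gh}$, and $\delta\bar c=(\lambda/\xi)[\text{g.f.\ condition}]$ produces the same factor with opposite sign. The two pieces cancel pointwise, which is the standard Faddeev--Popov mechanism underlying BRST invariance.

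The main obstacle I anticipate is the bookkeeping of the higher-derivative operator $(1+\Box/m^2)$ in $\mathcal{L}_{GF}$: checking that $\delta\mathcal{L}_{GF}$ and $\delta\mathcal{L}_{gh}$ cancel cleanly requires moving this operator across the ghost fields through several integrations by parts, each of which must be justified by suitable decay conditions so that boundary terms vanish. A subsidiary point is to verify the relative signs in $\delta B$ versus the ghost kinetic operator, to make sure the $gM^2$ contributions in $\delta\mathcal{L}_{GF}$ and $\delta\mathcal{L}_{gh}$ do indeed match rather than reinforce one another. Once these manipulations are completed, combining the trivial Jacobian with the vanishing variation of $\int d^4x\,\mathcal{L}_{eff}$ gives $Z_{GF}=Z_{GF}$ after the change of variables, establishing the lemma.
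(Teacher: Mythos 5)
The paper's own proof of this lemma is a one-line relabeling argument: $\tilde{Z}_{GF}$ is by definition the same functional integral written in terms of the new integration variables $\tilde{A},\tilde{B},\tilde{c},\tilde{\bar{c}}$, and since integration variables are dummies, $\tilde{Z}_{GF}=Z_{GF}$ identically; no property of the measure or of $\mathcal{L}_{eff}$ is invoked at this stage (those enter only in Lemma \ref{lemma measure} and Theorem \ref{theorem}). Your proposal opens with the same change-of-variables idea but then actually argues something different, namely invariance of the super-measure plus $\delta_{BRST}\int d^4x\,\mathcal{L}_{eff}=0$, and concludes from that combination. This does not prove the statement as written, because $Z_{GF}$ contains the source couplings $J_\mu A^\mu + jB + \bar{\zeta}c + \zeta\bar{c}$, and these are \emph{not} invariant under (\ref{transformation A}-\ref{brst c bar}) (for instance $J_\mu A^\mu \rightarrow J_\mu A^\mu + \lambda J_\mu\partial^\mu c$ and $\zeta\bar{c}\rightarrow\zeta\bar{c}+\zeta\,\delta\bar{c}$). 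Declaring the sources ``passive'' is not a justification inside an integrand-invariance argument; indeed, if the full integrand were invariant, the Ward identity extracted in Theorem \ref{theorem} would reduce to the trivial statement $0=0$. Either you prove the lemma by pure relabeling, in which case the Jacobian and the invariance of $\mathcal{L}_{eff}$ are irrelevant to this particular lemma, or you must carry the variation of the source terms explicitly, which is precisely the content of the later theorem rather than of this lemma. As it stands, your proof establishes (modulo the next point) the invariance of the sourceless amplitude $Z$, not of $Z_{GF}\left[J,j,\bar{\zeta},\zeta\right]$.

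A secondary but, on your route, load-bearing issue is the sign question you defer as a ``subsidiary point''. The cancellation between $\delta\mathcal{L}_{GF}$ and $\delta\mathcal{L}_{gh}$, and the invariance of $\mathcal{L}_S$ itself, hold only if the BRST shift of $B$ matches the gauge transformation (\ref{gsB}), i.e. $\delta B=-\lambda m_s c$ when $\delta A^\mu=+\lambda\partial^\mu c$ (equivalently $\alpha=\lambda c$ with the sign of (\ref{gsB}) respected). Taking the variations literally as $\delta A^\mu=+\lambda\partial^\mu c$, $\delta B=+\lambda m_s c$, the Stueckelberg mass term in (\ref{Lagra}) is not invariant and the $gM^2$ contributions from the gauge-fixing and ghost variations reinforce rather than cancel. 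In an argument whose central step is $\delta_{BRST}\mathcal{L}_{eff}=0$, this must be resolved explicitly, not merely flagged; in the paper's relabeling proof it never arises.
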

\begin{proof}
  As a matter of fact, the transformed generating functional is

  \begin{align}
  \tilde{Z}_{GF}\left[J,j,\bar{\zeta},\zeta\right]=&\,N\int \left[\prod_{\rho=0}^3\mathcal{D}\tilde{A}^\rho\right]\mathcal{D}\tilde{B} \mathcal{D}\tilde{\bar{c}}\mathcal{D}\tilde{c} e^{iS\left[\tilde{A},\tilde{B},\tilde{c},\tilde{\bar{c}};J,j,\bar{\zeta},\zeta\right]}\\
  =&\,N\int \left[\prod_{\rho=0}^3\mathcal{D}A^\rho\right]\mathcal{D}B\mathcal{D}\bar{c}\mathcal{D}c e^{iS\left[A,B,c,\bar{c};J,j,\bar{\zeta},\zeta\right]}\\
  =&\,Z_{GF}\left[J,j,\bar{\zeta},\zeta\right],
  \end{align}
  where we have used the fact that the integration ``variables'' are dummies.
\end{proof}

\begin{lemma}
The integration measure $\left[\prod_{\rho=0}^3\mathcal{D}A^\rho\right]\mathcal{D}B\mathcal{D}\bar{c}\mathcal{D}c $ is invariant under the BRS-T transformation (\ref{transformation A}-\ref{BRS-T c bar}).\label{lemma measure}
\end{lemma}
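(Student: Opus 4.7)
The plan is to show that the Berezinian (superdeterminant) of the BRS-T transformation equals one. First I would collect the fields into a single supervector $\Psi_I=(A^\mu,B,c,\bar{c})$ and write the super-Jacobian
\[
J_{IJ}(x,y)\equiv\frac{\delta\tilde{\Psi}_I(x)}{\delta\Psi_J(y)}.
\]
Since the change of variables is a bosonic shift for $A^\mu,B$ and a Grassmann shift for $c,\bar{c}$, the integration measure transforms as $\mathrm{sdet}(J)$, and we must verify $\mathrm{sdet}(J)=1$.

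Next I would read off $J$ from (\ref{transformation A}--\ref{brst c bar}). The crucial structural feature is that every entry off the identity diagonal is proportional to the Grassmann parameter $\lambda$, so that $J=\mathbb{1}+\lambda K$ with $K$ an operator-valued supermatrix independent of $\lambda$. Because $\lambda$ is Grassmann and anticommutes with itself, $\lambda^{2}=0$, so the expansion
\[
\ln J=\sum_{n\ge 1}\frac{(-1)^{n+1}}{n}(\lambda K)^{n}
\]
terminates at first order, giving $\ln J=\lambda K$ and hence $\mathrm{sdet}(J)=\exp\bigl(\mathrm{str}(\lambda K)\bigr)=1+\lambda\,\mathrm{str}(K)$.

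The final step is to compute $\mathrm{str}(K)$. This is where the structure of the BRS-T transformation does the work: $\delta A^\mu$ depends only on $c$; $\delta B$ depends only on $c$; $\delta c\equiv 0$; and $\delta\bar{c}$ depends only on $A^\mu$ and $B$. Therefore no field's variation depends on itself, so every diagonal matrix element
\[
\frac{\delta(\delta\Psi_I(x))}{\delta\Psi_I(y)}
\]
vanishes identically, regardless of the precise coefficients or differential operators that appear in the off-diagonal entries. The supertrace---which is just a signed sum of these diagonals---is therefore zero, and $\mathrm{sdet}(J)=1$. The conclusion then follows at once.

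The only points that require a little care are conventional: keeping the bosonic and fermionic blocks correctly ordered so that the supertrace has the right relative sign between the $(A,B)$-block and the $(c,\bar{c})$-block, and recognizing that the differential operators appearing off-diagonally ($\partial^\mu$, $(1+\Box/m^2)\partial_\nu$) never feed back into the diagonal because they are multiplied by $\lambda$, and any cross contribution like $\lambda K\lambda K$ vanishes by nilpotency of $\lambda$. Neither is a genuine obstacle; I expect no real difficulties beyond bookkeeping.
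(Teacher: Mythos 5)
Your proposal is correct and follows essentially the same route as the paper: both compute the Jacobian of the shift (\ref{transformation A}--\ref{brst c bar}), which has the form identity plus off-diagonal entries proportional to the nilpotent Grassmann parameter $\lambda$, and conclude it equals unity. Your use of the superdeterminant and the vanishing supertrace of $K$ is a slightly more careful bookkeeping of the ghost sector than the paper's direct determinant, but the underlying mechanism is identical.
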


\begin{proof}
The integration measure $\left[\prod_{\rho=0}^3\mathcal{D}A^\rho\right]\mathcal{D}B\mathcal{D}\bar{c}\mathcal{D}c $ transforms under (\ref{transformation A}-\ref{BRS-T c bar}) as

\begin{align}
\left[\prod_{\rho=0}^3\mathcal{D}\tilde{A}^\rho\right]\mathcal{D}\tilde{B} \mathcal{D}\tilde{\bar{c}}\mathcal{D}\tilde{c} = J\left[\prod_{\rho=0}^3\mathcal{D}A^\rho\right]\mathcal{D}B\mathcal{D}\bar{c}\mathcal{D}c
\end{align}
where $J$ is the Jacobian of the transformation which, in simplified notation, reads
\begin{eqnarray}
J&=&\,\left|\det\left[
\begin{array}{cccc}
  \delta^\mu_\nu & 0 & \frac{\lambda}{\xi}\left(1+\frac{\Box}{m^2}\right)\partial_\nu & 0 \\
  0 & 1 & -\frac{\lambda}{\xi}g\frac{M^2}{m_s} & 0 \\
  0 & 0 & 1 & 0 \\
  -\lambda\partial^\mu & -\lambda m_s & 0 & 1
\end{array}
\right]\delta^{\left(4\right)}\left(x-y\right)\right|\nonumber\\
&=&\delta^\mu_\nu\delta^{\left(4\right)}\left(x-y\right).
\end{eqnarray}
Therefore,
\begin{align}
\delta_{BRS-T}\left(\left[\prod_{\rho=0}^3\mathcal{D}A^\rho\right]\mathcal{D}B\mathcal{D}\bar{c}\mathcal{D}c\right)=0.
\end{align}
\end{proof}

We shall show the parameter $g$ is fictitious through the following

\begin{theorem}
 The transition amplitude $Z$ (\ref{transition}) is independent of the parameter $g$.\label{theorem}
\end{theorem}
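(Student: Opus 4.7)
The plan is to compute $\partial Z/\partial g$ and show it vanishes by exploiting the BRST invariance established in the two preceding lemmas. Since $g$ appears in $Z$ only through $\mathcal{L}_{eff}$, differentiating under the path integral yields
\begin{equation*}
\frac{\partial Z}{\partial g} = i\int \left[\prod_{\mu=0}^{3}\mathcal{D}A^{\mu}\right]\mathcal{D}B\,\mathcal{D}\bar{c}\,\mathcal{D}c\,\left(\int d^{4}x\,\frac{\partial\mathcal{L}_{eff}}{\partial g}\right) e^{iS_{eff}},
\end{equation*}
and a direct calculation from (\ref{tampli}) produces
\begin{equation*}
\frac{\partial\mathcal{L}_{eff}}{\partial g} = \frac{M^{2}}{\xi m_{s}}\,B\left[\left(1+\frac{\Box}{m^{2}}\right)\partial_{\mu}A^{\mu}-g\frac{M^{2}}{m_{s}}B\right]+M^{2}\bar{c}c.
\end{equation*}

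The key step is to recognize this combination as a BRST-exact expression. Applying $\delta_{BRST}$ to the composite $\bar{c}B$ using (\ref{brst c bar}) and the prescribed transformation of $B$, and using the graded Leibniz rule so that the anticommuting parameter $\lambda$ picks up a sign when moved past $\bar{c}$, the piece $(\delta_{BRST}\bar{c})B$ reproduces precisely the $B\chi$ term while $\bar{c}(\delta_{BRST}B)$ reproduces the $\bar{c}c$ term. I would thereby exhibit a constant $\kappa$ proportional to $M^{2}/m_{s}$ such that
\begin{equation*}
\frac{\partial\mathcal{L}_{eff}}{\partial g} = \frac{1}{\lambda}\,\delta_{BRST}\!\left(\kappa\,\bar{c}\,B\right).
\end{equation*}

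The proof closes by invoking the BRST Ward identity. A change of variables according to (\ref{transformation A})--(\ref{brst c bar}) leaves the measure invariant by Lemma \ref{lemma measure}, and leaves $S_{eff}$ invariant because the original $\mathcal{L}$ is gauge-invariant while the gauge-fixing and ghost pieces together constitute a BRST-exact (and hence BRST-closed) functional. Equating the transformed and untransformed path integrals and reading off the coefficient of the Grassmann parameter $\lambda$ gives $\langle\delta_{BRST}\Lambda\rangle=0$ for any functional $\Lambda$, and choosing $\Lambda=\kappa\,\bar{c}B$ yields $\partial Z/\partial g=0$. The main technical obstacle lies in the middle step: the transformation law $\delta B$ (and its compatibility with the underlying $B\to B-m_{s}\alpha$) together with the Grassmann commutation of $\lambda$ past $\bar{c}$ must align so that the two contributions of $\delta_{BRST}(\bar{c}B)$ combine with exactly the relative weight appearing in $\partial_{g}\mathcal{L}_{eff}$; once that sign bookkeeping is verified, the Ward identity immediately closes the argument.
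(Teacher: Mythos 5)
Your proposal is correct and is essentially the paper's own argument: the paper obtains exactly your identity $\left\langle\delta_{BRST}\left(\bar{c}\,B\right)\right\rangle=0$ by applying $\delta^{2}/\delta j\,\delta\zeta$ to the sourced BRST Ward identity (via Lemmas \ref{lemma GF} and \ref{lemma measure}), setting $y=x$, and combining with $\partial\mathcal{L}_{eff}/\partial g$, which is precisely what you do by direct insertion. The sign bookkeeping you flag does close, provided the ghost variation of $B$ carries the sign dictated by the gauge transformation (\ref{gsB}), namely $\delta B=-\lambda m_{s}c$, for which
\begin{equation*}
\left(\delta\bar{c}\right)B+\bar{c}\,\delta B=\lambda\left(\frac{1}{\xi}\left[\left(1+\frac{\Box}{m^{2}}\right)\partial_{\mu}A^{\mu}-g\frac{M^{2}}{m_{s}}B\right]B+m_{s}\bar{c}c\right)=\frac{m_{s}\lambda}{M^{2}}\,\frac{\partial\mathcal{L}_{eff}}{\partial g},
\end{equation*}
so the Ward identity indeed gives $\partial Z/\partial g=0$.
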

\begin{proof}
  The derivative of the transition amplitude (\ref{transition}) with respect to $g$ reads\footnote{We used the widespread notation $\left\langle O\right\rangle=\int \left[\prod_{\mu=0}^3\mathcal{D}A^\mu\right]\mathcal{D}B\mathcal{D}\bar{c}\mathcal{D}c\, O e^{i\int d^4x\mathcal{L}_{{eff}}}$ and
   $\left\langle O\right\rangle_s=\int \left[\prod_{\mu=0}^3\mathcal{D}A^\mu\right]\mathcal{D}B\mathcal{D}\bar{c}\mathcal{D}c\, O e^{iS\left[A,B,c,\bar{c};J,j,\bar{\zeta},\zeta\right]}$ for any operator $O$.}

  \begin{align}
    \frac{\partial Z}{\partial g}=i\int d^4x\left\langle \frac{1}{\xi}\left[\left(1+\frac{\Box}{m^2}\right)\partial_\mu A^\mu\left(x\right) -g\frac{M^2}{m_s}B\left(x\right)\right]\frac{M^2}{m_s}B\left(x\right)+M^2\bar{c}\left(x\right)c\left(x\right) \right\rangle.\label{partial Z partial g}
  \end{align}
  Computing $\delta_{BRS-T}Z_{GF}\left[J,j,\bar{\zeta},\zeta\right]$ and taking into account the lemmas (\ref{lemma GF}) and (\ref{lemma measure}) furnishes

  \begin{align}
  i\lambda\int d^4z\left\langle J_\mu\left(z\right)\partial^\mu c\left(z\right)+j\left(z\right)m_sc\left(z\right) +\frac{\zeta\left(z\right)}{\xi}\left[\left(1+\frac{\Box}{m^2}\right)\partial_\mu A^\mu\left(z\right) - g\frac{M^2}{m_s}B\left(z\right)\right]\right\rangle_s=0.
  \end{align}

  Applying the functional differential operator ${\delta^2}/{\delta j\left(x\right)\delta \zeta\left(x\right)}$ to both sides of this identity and subsequently making the classical sources vanish yields

  \begin{align}
  \lambda\left\langle\frac{1}{\xi}\left[\left(1+\frac{\Box}{m^2}\right)\partial_\mu A^\mu\left(y\right) - g\frac{M^2}{m_s}B\left(y\right)\right]B\left(x\right)+m_s\bar{c}\left(y\right)c\left(x\right)\right\rangle=0.
  \end{align}
  Taking into account that $\lambda$ is not zero, choosing $y=x$, and inserting the result into (\ref{partial Z partial g}) shows us that

  \begin{align}
  \frac{\partial Z}{\partial g}=0,
  \end{align}
  or, in other words, the transition amplitude $Z$ (\ref{transition}) is independent of the parameter $g$.
\end{proof}

Since the transition amplitude remains the same for any parameter $g$ and all physical quantities depend upon the transition amplitude, we conclude that no physical quantity depends on $g$. In other words, $g$ is a \textit{fictitious parameter}. We usually choose $g=\xi$ when we deal with t'Hooft gauges \cite{thooft, Framp}. Finally, it is important to see that when we choose $g=\xi$ the gauge field A and the auxiliary B field do not, in general, decouple, just in the limit $m_{P}\rightarrow\infty$. To have the decoupling of the Stueckelberg-photon interaction in higher derivative theory we need to generalize the Stueckelberg Lagrangian term $\frac{1}{2}M^{2}S_{\mu}{S^{*}}^{\mu}$ in the following form

\begin{equation}
\label{decoupled}
S_{\mu}=A_{\mu}+\frac{1}{m_{s}}(1+\frac{\Box}{m_{P}^{2}})\partial_{\mu}B+\frac{i}{m_{p}^{2}}\partial_{\lambda}F^{\lambda}_{\phantom{\lambda}\mu}.
\end{equation}
So if we choose $g=\xi$ we decouple the fields and have massive Stueckelberg and ghost fields. On the other hand, if we choose $g=0$ we couple the fields and we have non massive Stueckelberg and ghost fields, independent of the limit $m_{P}\rightarrow\infty$.

\section{Generalized Higgs gauge theory}
\label{Higgs}

What we call the generalized Higgs model is defined  by the Lagrangian density

\begin{align}
\label{Laghiggs}
\mathcal{L}_H=&\, -\frac{1}{4}F^{\mu\nu}F_{\mu\nu}+\frac{1}{2}a_0^2\partial_\mu F^{\mu\nu}\partial_\rho F^{\rho}_{\phantom{\rho}\nu} +\nabla_\mu\phi\left(\nabla^\mu\phi\right)^*+\mu_s \left|\phi\right|^2-\lambda_s\left|\phi\right|^4,
\end{align}
where the first two terms are like those of (\ref{LS}), $\phi$ is a complex scalar field, $\mu_s $ and $\lambda_s$ are real parameters with $\lambda_s>0$, and

\begin{align}
\nabla_{\mu}\equiv&\, \partial_\mu -iq\left(A_\mu+iZ_\mu\right)
\end{align}
with $Z_\mu\equiv\partial^\nu F_{\nu\mu}/m_P^2$ is an extension of the minimal coupling.

First of all, we notice that this model is invariant under a $U\left(1\right)$-gauge symmetry:

\begin{align}
\phi\left(x\right)\rightarrow&\,e^{i\alpha\left(x\right)}\phi\left(x\right);\\
\phi^*\left(x\right)\rightarrow&\,e^{-i\alpha\left(x\right)}\phi^*\left(x\right);\\
A_\mu\left(x\right)\rightarrow&\,A_\mu\left(x\right)+\frac{1}{q}\partial_\mu\alpha\left(x\right),
\end{align}
$\alpha$ being a (sufficiently smooth) scalar field.

Up to a total-derivative term, (\ref{Laghiggs}) can be brought to the form

\begin{align}
\mathcal{L}_H=&\, -\frac{1}{4}F^{\mu\nu}F_{\mu\nu} +\partial^\mu\phi^*\partial_\mu\phi +iqA^\mu\left(\phi^*\partial_\mu\phi-\phi\partial_\mu\phi^*\right)+q^2\left|\phi\right|^2A^\mu A_\mu+\mu_s \left|\phi\right|^2-\lambda_s\left|\phi\right|^4\nonumber\\
&\,+\frac{1}{2}\left(\frac{2q^2}{m_p^4}\left|\phi\right|^2+a_0^2\right)\partial_\mu F^{\mu\nu}\partial_\rho F^{\rho}_{\phantom{\rho}\nu},\label{LH}
\end{align}
where the first line is the one expected from the usual Maxwellian Higgs model.

In order to proceed, we restrict our analysis to the parameter region in which $\mu_s>0$ and we write the complex scalar field in the polar representation:

\begin{align}
\phi\left(x\right)=&\,e^{-i\frac{\theta\left(x\right)}{v}}\left[\frac{\chi\left(x\right)+v}{\sqrt{2}}\right],
\end{align}
where $\theta$ and $\chi$ are real scalar fields and $v$ is the constant

\begin{equation}
v=\sqrt{\frac{\mu_s}{\lambda_s}}.
\end{equation}

In terms of these newly defined scalar fields, the Lagrangian density (\ref{LH}) takes the form

\begin{align}
\label{genlaghiggs}
\mathcal{L}_H=&\,-\frac{1}{4}F^{\mu\nu}F_{\mu\nu}+\frac{1}{2}\left(a_0^2+\frac{q^2v^2}{m_P^4}\right)\partial_\mu F^{\mu\nu}\partial_\rho F^\rho_{\phantom{\rho}\nu}+\frac{q^2v^2}{2}\left(A^\mu+\frac{1}{qv}\partial^\mu\theta\right)\left(A_\mu+\frac{1}{qv}\partial_\mu\theta\right)\nonumber\\
&\,+\frac{1}{2}\partial^\mu\chi\partial_\mu\chi-\mu_s\chi^2-\mathcal{V}_I\left[A,\theta,\chi\right],
\end{align}

Here, $\mathcal{V}_I\left[A,\theta,\chi\right]$ is a term of interaction among the three fields, whose exact form is unimportant for our purposes.\footnote{As a matter of fact, $\mathcal{V}_I\left[A,\theta,\chi\right]$ is the same interaction potential for the Maxwellian Higgs model added to the following two terms: $\left(q^2/2m_P^4\right)\chi^2\partial_\mu F^{\mu\nu}\partial_\rho F^\rho_{\phantom{\rho}\nu}$ and $\left(q^2v/m_P^4\right)\partial_\mu F^{\mu\nu}\partial_\rho F^\rho_{\phantom{\rho}\nu}$. So, the $\theta$ field remains massless, in accordance to the Golstone theorem, while the Lagrangian density presents a term of the form $\left(M^2/2\right)A^\mu A_\mu$. Such term is associated, in first-order derivative theories, with the non-vanishing mass of the gauge field.} More importantly, we notice that if we define

\begin{align}
&q^2v^2\equiv\,M^2\equiv m_s^2\\
&a^2\equiv\,a_0^2+\frac{q^2v^2}{m_P^4}\equiv\frac{1}{m^{2}},
\end{align}
in which we have an infrared (IR) mass $M$ and an ultraviolet mass (UV) $m$ in term of the Podolsky parameter. Interesting the (IR) and the (UV) regime are tied by the interaction. In the free case ($q=0$) we have (5+2) degrees of freedom (Podolsky+complex scalar = Maxwell+Proca+complex scalar) but when we turn on the interaction we have (3+3+1) degrees of freedom (Proca+Proca+Higgs). Therefore, we see how the generalized photon eats the Goldstone boson, shifting the Podolsky parameter and gaining an extra longitudinal degree of freedom. The first three terms of the previous Lagrangian density are exactly equal to the Lagrangian density for the generalized Stueckelberg model (\ref{Lagra}). This allows us to apply the analysis presented in the previous sections and analogously quantize the generalized Higgs model by writing the transition amplitude in covariant form:

\begin{align}
Z=\int \left[\prod_{\mu=0}^3\mathcal{D}A^\mu\right]\mathcal{D}\theta\mathcal{D}\chi\mathcal{D}\bar{c}\mathcal{D}c\, e^{i\int d^4x\mathcal{L}_{{e}}},
\end{align}
where the effective Lagrangian density for the generalized Higgs model $\mathcal{L}_{{e}}$ is defined as

\begin{align}
\mathcal{L}_{{e}}\equiv \mathcal{L}_H-\frac{1}{2\xi}\left[\left(1+\frac{\Box}{m^2}\right)\partial_\mu A^\mu-gM\theta\right]^2 +\bar{c}\left[\left(1+\frac{\Box}{m^2}\right)\Box+gM ^2\right]c,
\end{align}
where $\bar{c}$ and $c$, like in the Stueckelberg model, are Grassmannian scalar fields and $g$ is a real parameter. The theorem (\ref{theorem}) allows us to choose $g$ as we please. So if we choose $g=\xi$ and take the limit of $m\rightarrow\infty$ we decouple the Goldstone-photon interaction. On the other hand if we want to decouple the Goldstone-photon interaction term independent of the previous $m$ limit, we need to generalize the covariant derivative of the Higgs Lagrangian term $\nabla_\mu\phi\left(\nabla^\mu\phi\right)^*$ in the following form

\begin{align}
\nabla_{\mu}\equiv&\, (1+\frac{\Box}{m_{}^{2}})\partial_\mu -iq\left(A_\mu+iZ_\mu\right),
\end{align}
in a way similar to the Stueckelberg analysis in eq. (\ref{decoupled}). Generalizing the Stueckelberg and Higgs fields with higher-derivatives has implications outside the investigation present here \cite{Donohue}, though, creating a \emph{per se} atmosphere for future investigations.


As we can see, the Fadeev-Popov trick in the Stueckelberg form helps to write the transition amplitude in the t'Hooft's shape \cite{thooft, Framp}. The demonstration that the transition amplitude and, reciprocally, the physics does not depend on $g$ is the same as Sec. \ref{BRS-T}. Other implication of BRS-T symmetry study is that we can generate the Ward-Takahashi identities, preparing the ground for a future demonstration that the theory is renormalizable independent of the spontaneous symmetry breaking \cite{R1,R2,R3,R4,R5,R6}.

\section{Conclusions and final remarks}
\label{conclusion}

This paper is devoted to analyse some mechanisms of mass generation in the context of generalized Stueckelberg-Higgs gauge theories. Our goal was to try to achieve a better understanding of the ultraviolet and the infra-red regime of the photon respecting gauge symmetry. The inspiration of this study lies in understanding the relationship between masses, size of the particles\textcolor{blue}{,} and range of the interactions.

Firstly we work with the generalized Stueckelberg electrodynamics studying the classical equation of motion, constructing the dynamics in the phase space, dealing with the constraints with the Dirac methodology, implementing the gauge fixing condition, constructing the transition amplitude with the Fadeev-Senjanovic procedure and writing it into a covariant form with the Fadeev-Popov method. It is interesting to see the link between the constraints and the physical degrees of freedom in the quantization process \cite{Anderson}. As we can see we have 18 variables and 6 constraints. So we have 12 variables describing the phase space and 6 degrees of freedom, Maxwell+Proca+Scalar (2+3+1). The present model can be seen as a toy model to study the quantization process in the functional integration approach.

Secondly when we write the transition amplitude in a covariant form  we can see the emergence of a  t'Hooft parameter $g$, eq. (\ref{tampli}). We prooved that this parameter does not contribute to the physics by using BRS-T symmetry of the transition amplitude of the theory, following the same line of reasoning of \cite{BRS-T, BRS-T2}. So the generalized Stueckelberg gauge theory is one more example that shows the link between the fictitious parameters and BRS-T symmetry.

Thirdly as a complement of the discussion, following the historical influence of the Stueckelberg field to the Higgs mechanism and generation of mass \cite{Stueck}, we construct the generalized Higgs model. As can been seen in eq. (\ref{genlaghiggs}) the Podolsky parameter depends of the interaction coupling $q$, where we define the infrared (IR) mass and the ultraviolet mass (UV). So, in this view, the Podolsky parameter's actual value is a consequence of the interaction with the Higgs field. The interesting thing here is that in the language of spontaneous symmetry breaking the generalized photon eat the Goldstone boson and acquire an IR and UV masses, therefore this regimes are tied by the interaction. As a consequence, we have the tools to study the influence of the radiative corrections in the generalized Stueckelberg-Higgs theory, exploring the proprieties of the effective action in this general context \cite{Coleman, Peskin, Weinb}.

As we have seen, the physical sector of the propagator of the Podolsky model with the Proca term is proportional to the difference of two massive propagators. It suggests that, differently from the Podolsky model, it must propagate $6$ and not just $5$ degrees of freedom. In order to properly answer this question, a rigorous Ostrogradsky analysis of the constraints of this model was considered. This was one of the motivation of our work. Regarding the inclusion of the Stueckelberg particle, it allows a gauge invariant description which implies in the possibility of useful Ward identities that may reveal non perturbative results associated to the model \cite{WTStu}. As we see, the Stueckeberg particle does not change the system's degree of freedom since it is pure gauge and lies outside of the physical spectrum. Its inclusion turn the constraint analysis more involved but preserves a first-class nature interesting enough to be explored here. Moreover, there is the possibility to associate this structure with a generalized Higgs mechanism, one of the main goals of this discussion.

Finally the next steps of this research would be introduce the coupling with fermions and to investigate how to implement the Podolsky photon in the context of electroweak theory $SU(2)\times U(1)$. We think that the relation between the neutral boson $Z_{0}$ and the photon $A_{\mu}$ are generalized in the sense of Podolsky with the Stueckelberg and Higgs masses, so we will have a better UV and IR controlling of divergences. Other context that the Stueckelberg field could be applied is in $SU(3)\times U(1)$ model for electroweak interactions \cite{Pleitez} where the appearance of more phenomenological vertices would be interesting. These matters will be further worked out and require elaborations.

\section*{Acknowledgement}
G. B. de Gracia thanks CAPES PhD grant (CP), A. A. N. thanks National Post-Doctoral Program grant (PNPD) for support and B. M. P. thanks CNPq for partial support.

\section*{Compliance with Ethical Standards}

{\bf Funding}: Capes, PNPD, Cnpq. {\bf Conflict of Interest}: There are no known conflicts of interest associated with this publication and there has been no significant financial support for this work that could have influenced its outcome. {\bf Ethical Conduct}: The article is aligned with ethical principles and responsible conduct of research.

\end{document}